\theoremstyle{plain}
\newtheorem{slln}{Theorem}[section]
\newcounter{hypA}
\newenvironment{hypA}{\refstepcounter{hypA}\begin{itemize}
  \item[({\bf A\arabic{hypA}})]}{\end{itemize}}
\begin{document}
\title[Convergence of the Equi-Energy Sampler]{A note on convergence of the equi-energy sampler}

\author[Andrieu, Jasra, Doucet \& Del Moral]{By CHRISTOPHE ANDRIEU, AJAY JASRA, \\ ARNAUD DOUCET \& PIERRE DEL MORAL\\
\emph{University of Bristol, Imperial College London,}\\
\emph{University of British Columbia \& University of Nice}
}
\date{\today}
\maketitle
\begin{abstract}
In a recent paper `The equi-energy sampler with applications
statistical inference and statistical mechanics'
[\emph{Ann.~Stat.} {\bf 34} (2006)  1581--1619], Kou, Zhou \& Wong
have presented a new stochastic simulation method called the
equi-energy (EE) sampler. This technique is designed to simulate from a
probability measure $\pi$, perhaps only known up to a normalizing
constant. The authors demonstrate that the sampler performs well
in quite challenging problems but their convergence results
(Theorem 2) appear incomplete. This was pointed out, in the
discussion of the paper, by Atchad\'e \& Liu (2006) who proposed
an alternative convergence proof. However, this alternative proof,
whilst theoretically correct, does not correspond to the algorithm
that is implemented. In this note we provide a new proof of
convergence of the equi-energy sampler based on the Poisson
equation and on the theory developed in Andrieu et al.~(2007) for
\emph{Non-Linear} Markov chain Monte Carlo (MCMC). 
The objective of this note is to provide a proof of correctness of the 
EE sampler when there is only one feeding chain; the general case
requires a much more technical approach than is suitable for a short note.
In addition, we also seek to highlight the difficulties 
associated with the analysis of this type of algorithm and present the main techniques that may be adopted to prove the convergence of it.
\end{abstract}

\thispagestyle{plain}
\footnotetext[1]{\textbf{AMS 2000 Subject Classification}: Primary 82C80; Secondary
60F99, 62F15\\
\textbf{Key words}: Equi-Energy Sampler, Non-Linear Markov chain Monte Carlo,
Poisson Equation, Uniform Ergodicity.}

\section{Introduction}

In this note we consider  the convergence properties of a new
stochastic simulation technique, the equi-energy sampler
introduced in (Kou, et al.~2006). This is a method designed to draw samples
from a probability measure $\pi\in\mathscr{P}(E)$ (where
$\mathscr{P}(E)$ denotes the class of probability measures) on
measurable space $(E,\mathscr{E})$, where $E$ may be a high
dimensional space and the density, is known pointwise up to a
potentially unknown constant. In particular, the algorithm
generates a non-Markovian stochastic process $\{X_n\}_{n\geq 0}$
whose stationary distribution is ultimately $\pi$; this algorithm is described
fully in Section 2.

In the paper of Kou et al.~(2006), 
an attempt to analyze the algorithm is made (in Theorem
2).
However, it was noticed in the discussion by Atchad\'e \& Liu
(2006) that this result is incomplete. We note the points that
were stated by Atchad\'e \& Liu and further expand upon their
point; see Section 3. An important remark is that Atchad\'e \& Liu attempt to
provide an alternative convergence result, via a Strong
Law of Large Numbers (SLLN) for bounded measurable functions.
Although this proof is correct, the authors study a
stochastic process which does not correspond to the algorithm; this
problem is outlined in Section 3.

The objective of this note is to provide some convergence proofs
for the EE sampler in a simple scenario (one feeding chain). We also note the difficulties 
associated with the analysis of this type of algorithm and present the main methods that can be used to prove the SLLN.
To avoid unnecessary technicalities and
focus on the `essence' of the proof, strong
assumptions are made: including the uniform ergodicity of some transition kernels.
Our proof strategy is via the Poisson equation (e.g.~Glynn \&
Meyn (1996)) and the techniques developed for Non-Linear MCMC
(Andrieu et al.~2007). That is, 
the EE sampler is a non-linear MCMC algorithm and may be analyzed in a similar
manner. Our results can be found in Section 4.

\section{Notation and Algorithm}

We now outline the notation that is adopted throughout the paper as well as the algorithm that is analyzed.

\subsection{Notation}

Define a measurable space $(E,\mathscr{E})$, with $\pi\in\mathscr{P}(E)$
(recall $\mathscr{P}(E)$ denotes the class of probability measures on $(E,\mathscr{E})$)
a target probability measure of interest.

For a stochastic process $\{X_n\}_{n\geq 0}$ on $(E^{\mathbb{N}},\mathscr{E}^{\otimes\mathbb{N}})$,
$\mathscr{G}_n=\sigma(X_0,\dots,X_n)$
is the natural filtration. $\mathbb{P}_{\mu}$ is taken as a probability law of a stochastic
process with initial distribution $\mu$ and $\mathbb{E}_{\mu}$ the associated expectation.
If $\mu=\delta_x$ (with $\delta$ the Dirac measure)
$\mathbb{P}_x$ (resp.~$\mathbb{E}_{x}$) is adopted instead of $\mathbb{P}_{\delta_x}$ (resp.~$\mathbb{E}_{\delta_x}$).
We use $X_n\stackrel{a.s}{\longrightarrow}_{\mathbb{P}}X$ to denote almost sure
convergence of $X_n$ to $X$. The equi-energy sampler generates
a stochastic process on $(\Omega,\mathscr{F})$, which is defined in the next Section.

Let $\|\eta-\mu\|_{\textrm{tv}}:=\sup_{A\in\mathscr{E}}|\eta(A)-\mu(A)|$
denote the total variation distance between $\eta,
\mu\in\mathscr{P}(E)$. Throughout, 
$K:E\rightarrow\mathscr{P}(E)$ is taken as a generic Markov kernel; the standard notations, for measurable $f:E\rightarrow\mathbb{R}$,
$K(f)(x):=\int_{E}f(y)K(x,dy)$ and for $\mu\in\mathscr{P}(E)$ 
$\mu K(f):=\int_{E}K(f)(x)\mu(dx)$ are used. Let $f:E\times
E\rightarrow\mathbb{R}$, then for $\mu\in\mathscr{P}(E)$,
$\mu(f)(x):=\int_{E}f(x,y)\mu(dy)$, with an obvious extension to
higher dimensional spaces. $\mathcal{B}_b(E)$ is used to
represent the bounded measurable functions and for
$f\in\mathcal{B}_b(E)$, $\|f\|_{\infty}:=\sup_{x\in E}|f(x)|$
is used to denote the supremum norm.

We will denote by $K_{\mu}:\mathscr{P}(E)\times E\rightarrow\mathscr{P}(E)$ a generic \emph{non-linear} Markov kernel
and its
invariant measure (given its existence) as $\omega(\mu)$ ($\omega:\mathscr{P}(E)\rightarrow\mathscr{P}(E)$).
For a sequence of probability measures $\{\mu_n\}_{n\geq 0}$ we denote the
composition $\int_{E^{n-1}} K_{\mu_1}(x,dy_1)\dots K_{\mu_n}(y_{n-1},A)$ as
$K_{\mu_1:\mu_n}(x,A)$.
The empirical
measure of an arbitrary stochastic process $\{X_n\}_{n\geq 0}$ is
defined, at time $n$, as:
\begin{eqnarray*}
S_n(du) & := & \frac{1}{n+1}\sum_{i=0}^n\delta_{x_i}(du).
\end{eqnarray*}

In addition, $a\vee b:=\max\{a,b\}$ (resp.~$a\wedge b:=\min\{a,b\}$). The indicator
function of $A\in\mathcal{E}$ is written $\mathbb{I}_A(x)$.
Note also that $\mathbb{N}_0=\mathbb{N}\cup\{0\}$, $\mathbb{T}_m:=\{1,\dots,m\}$.

\subsection{Algorithm}

We introduce a sequence of probability measures, for $r \geq 2$, $\{\pi_n\}_{n\in\mathbb{T}_{r}}$, $\pi_n\in\mathscr{P}(E)$, $n\in\mathbb{T}_r$ and $\pi_r\equiv \pi$ which are assumed to be absolutley
continuous, wrt some reference measure $\lambda^*$, and, in an abuse of notation,
write the Radon-Nikodym derivatives as $d\pi_n/d\lambda^*(x)=\pi_n(x)$ also. The EE sampler will generate a stochastic process
$\{Y_{n}^r\}_{n\geq 0}$, with $Y_n^r=(X_n^1,\dots,X_n^r)$, with $X_{n}^i:E\rightarrow \mathbb{R}^k$, $i\in\mathbb{T}_r$, $k\geq 1$ (that is $\{Y_{n}^r\}_{n\geq 0}$ is a stochastic
process on $(\Omega,\mathscr{F})=((E^r)^{\mathbb{N}},(\mathscr{E}^{\otimes r})^{\otimes\mathbb{N}})$).
Central to the construction of the EE sampler is the concept of the energy
rings; this will correspond to the partition $E=\bigcup_{i=1}^d E_i$.

For each $X_n^i$ we associate a non-linear Markov kernel $\{K_{\mu,n}\}_{n\in\mathbb{T}_{r}}$
with $K_{\mu,1}\equiv K_1$ (i.e.~$K_1$ is an ordinary Markov kernel) and
$\mu\in\mathscr{P}(E)$. Additionally, assume that
for $i=2,\dots,r-1$:
\begin{eqnarray}
\omega_{i}(\pi_{i-1})K_{\pi_{i-1},i}(dy) & = & \omega_{i}(\pi_{i-1})(dy) = \pi_{i}(dy)\label{invmeas}
\end{eqnarray}
and that $\pi_1 K_1 = \pi_1$. Here, it is assumed that, given that we input the invariant probability
measure for $K_{\pi_{i-2},i-1}$ into the non-linear kernel $K_{\mu,i}$, the target probability
measure $\pi_i$ is obtained. Define:
\begin{eqnarray}
K_{\mu,i}(x,dy) & := & (1-\epsilon) K_i(x,dy) + \epsilon Q_{\mu_x,i}(x,dy)\label{nonlinker}
\end{eqnarray}
$i=2,\dots, r$, $\epsilon \in [0,1]$, with $K_i$ a Markov kernel
of invariant distribution $\pi_i$ and also:
\begin{eqnarray*}
Q_{\mu_x,i}(x,dy) & := & \int_{E}\mu_x(dz) K^{S}_i(K_i(dy))(x,z)\\
\mu_{x}(A) & := & \sum_{i=1}^d\mathbb{I}_{E_i}(x)\frac{\mu(E_i\cap A)}{\mu(E_i)}
\end{eqnarray*}
where it is assumed $\mu(E_i)>0$; let $\mathscr{P}_{d}(E)=\{\mu\in\mathscr{P}(E):\mu(E_i) > 0~\forall i\in\mathbb{T}_d\}$.
Finally define:
\begin{eqnarray*}
K^S_i((x,y),d(x',y')) & := & \delta_{x}(dy')\delta_{y}(dx')\alpha_i(x,y) + \delta_{x}(dx')\delta_{y}(dy')[1-\alpha_i(x,y)]\\
\alpha_i(x,y) & = & 1\wedge\frac{\pi_i(y)\pi_{i-1}(x)}{\pi_i(x)\pi_{i-1}(y)}
\end{eqnarray*}
which is the swapping kernel.
It is easily seen that the kernels (\ref{nonlinker}) satisfy the equation (\ref{invmeas}). However,
it is often the case that such a system cannot be simulated exactly. The idea is to approximate the
correct probability measures $\pi_{n}$ via the empirical measures generated by the previous chain.

The algorithm which corresponds to the equi-energy sampler is as follows. Define predetermined
integers $N_1,\dots, N_r$ and assume that for all $i\in\mathbb{T}_r$, $j=\mathbb{T}_d$ (recall $d$ corresponds to the number of energy levels) we have $S^i_{N_{1:i}}(E_j)>0$ with $S^i$ the
empirical measure of the $i^{th}$ process and $N_{1:i}=\sum_{j=1}^i N_j$. The algorithm is in Figure \ref{eesampler}.

\begin{figure}[h]
\begin{flushleft}
\noindent\textsf{0.: Set $n=0$ and $X_0^{1:r}=x_0^{1:r}$, $S^l_0=\delta_{x_0^l}$, $l=1,\dots,r$. Set $i=1$.}\\
\textsf{1.: Perform the following for $i=1$ until $i=r$. Set $j=1$.}\\
\textsf{2.: Perform the following for $j=1$ until $j=N_i$, then set $i=i+1$ and go to 1.}\\
\textsf{3.: Set $n=n+1$, $k=1$.}\\
\textsf{4.: Perform the following for $k=1$ until $k=i$, then set $k=i+1$ and go to 5.}\\
$X_{n}^k\sim K_{S_{n}^{k-1},k}(x_{n-1}^k,\cdot)$, $S_n^k=S_{n-1}^k + \frac{1}{n+1}[\delta_{x_n^k} - S_{n-1}^k]$,
\textsf{set $k=k+1$ and go to 4.}\\
\textsf{5.: Perform the following for $k=i+1$ until $k\geq r$, then set $j=j+1$ and go to 2.}\\
\textsf{6.: $X_n^k\sim \delta_{x_{n-1}^k}(\cdot)$ then set $k=k+1$
and go to 5.} 
\end{flushleft}
\caption{An equi-energy sampler.} 
\label{eesampler}
\end{figure}

\noindent \emph{\textbf{Remark 1}. 
We point out here that our algorithm is slightly 
different from that of Kou et al. There, the EE
jump can be seen as using a Metropolis-Hastings (M-H) independence
sampler with proposal $\pi_{i-1}$ constrained to the set $E_i$
currently occupied by the current state (the kernel is then
approximated). We have preferred to do this in a
selection/mutation type format (see Del Moral (2004)) where a
value is selected from the empirical measure of the lower chain and
then put through a M-H exchange step. We then
allow a possibility of mutation (sampling from $K_i$).
This has been done 
in order to fit our proof in the framework of Andrieu et al.~(2007), which allows us, below, to refer to minor technical results from that work and
 hence reduce the length of this note. It should be noted that, from
a technical point of view, changing the algorithm back to the EE
sampler presents no difficulties, in terms of the following arguments. Indeed,
the only real changes to the proofs are some of the technical assumptions
in Andrieu et al.~(2007) and the uniform in time drift condition presented
there (Proposition 4.1).}

\noindent \emph{\textbf{Remark 2}. In our view, the non-linear kernel
interpretation of the equi-energy sampler allows us to
intuitively understand some practical issues associated to the
algorithm, whilst perhaps not requiring a full technical understanding. For example, if there is only one feeding chain, and 
it is stopped at some point, then we can observe from equation
(\ref{invmeas}) that this algorithm is then biased (contrary
to the point of Kou et al.~(2006) pp-1647, 5th par, although we realize that it is not possible to store an infinite number of samples).}

\section{Discussion of the Previous Proofs}

The difficulties of the convergence proofs of Kou
et al~(2006) and Atchad\'e \& Liu (2006) are now discussed.

\subsection{Theorem 2 of Kou et al.~(2006)}

We begin with the proof of Theorem 2 of Kou et al.
Recall that the Theorem states, under some assumptions,
that the steady state distribution of $\{X^{i}_n\}_{n\geq 0}$ is $\pi_i$. The authors use induction
and start by using the ergodicity of the M-H chain which verifies the case $r=1$ and continue from there.

Atchad\'e \& Liu state that equation (5) of the proof is not
clear, however, we note that the equation can indeed be verified
(and as stated by Kou et al.~(2006) in the rejoinder to the
discussion (pp-1649)) by using the SLLN (via the induction
hypothesis) and bounded convergence theorem.

The main difficulty of the proof is as follows, quoting Kou et al (2006), pp-1590:
\begin{quote}
\emph{Therefore, under the induction assumption, $X^{(i)}$ is asymptotically equivalent to a Markovian sequence
governed by $S^{(i)}(x,\cdot)$.}
\end{quote}
Here the kernel $S^{(i)}(x,\cdot)$ is the theoretical kernel 
corresponding to $K_{\pi_{i-1},i}$. The authors then state that
$S^{(i)}(x,\cdot)$ is an ergodic Markov kernel which then yields
the convergence of $X^{(i)}$. This is the difficulty of the proof:
the authors verify that the transitions of the stochastic process
are asymptotically equivalent to that of an ergodic Markov kernel, however,
this is not enough to provide the required convergence of the process.
That is, Kou et al.~(2006) prove that (suppressing the notation $N_{1:i-1}$)
\begin{equation*}
\lim_{n\rightarrow \infty}|K_{S_n^{i-1},i}(x,A)-K_{\pi_{i-1},i}(x,A)| \stackrel{a.s}
{\longrightarrow}_{\mathbb{P}^{(i-1)}}0
\end{equation*}
where $\mathbb{P}^{(i-1)}$ is the probability law of the process with $i-1$
chains. However, this convergence property essentially means that when the input probability measure $S_n^{i-1}$ is converging to the
`correct' probability measure $\pi_{i-1}$ then a set-wise convergence
of the non-linear kernel $K_{\cdot,i}$ is induced. 
This is far from sufficient as the
law of the process at iteration $n$  is, for $A\in\mathscr{E}$
$$
K_{S_1^{i-1},i}[K_{S_2^{i-1},i}[\cdots K_{S_n^{i-1},i}(A)],
$$
where $S_1^{i-1},S_2^{i-1},\ldots,S_n^{i-1}$ are empirical distributions
constructed from the same realisation of the process at level $i-1$. It is
clear that if the algorithm is to converge, then the joint distributions of $X_{n-\tau}^{(i)},\ldots,X_n^{(i)}$ for any (in fact increasing with $n$) lag $\tau$ should converge to
\[
K_{\pi_{i-1},i}\times K_{\pi_{i-1},i}\times\cdots \times K_{\pi_{i-1},i},
\]
which as we shall see is far from trivial.
This remark indicates an appropriate approach to a proof; via standard Markov chain convergence theorems.
As a result, using the arguments of Kou et al.~(2006), we cannot 
even say that
\begin{equation*}
\lim_{n\rightarrow \infty}|K_{S_{1}:S_{n+N_{1:i-1}},i}(x,A)-\pi_i(A)| \stackrel{a.s}
{\longrightarrow}_{\mathbb{P}^{(i-1)}}0
\end{equation*}
via the ergodicity of $K_{\pi_{i-1},i}(x,A)$; i.e.~a set-wise convergence of the kernel that is \emph{simulated}.


\subsection{Theorem 3.1 of Atchad\'e \& Liu (2006)}

Atchad\'e \& Liu state (pp-1625, in the proof of Theorem 3.1):
\begin{quote}
\emph{Note that the $i^{th}$ chain is actually a non-homogeneous
Markov chain with transition kernels
$K_0^{(i)},K_{1}^{(i)},\dots$, where
$K_n^{(i)}(x,A)=\mathbb{P}(X_{n+1}^{(i)}\in A|X_n^{(i)}=x)$.}
\end{quote}
This statement is not quite accurate. The $i^{th}$ chain is a
non-homogeneous Markov chain only conditional upon a realization
of the previous chain; unconditionally, it is not a Markov
chain. As a result, Atchad\'e \& Liu analyze the process of
kernel:
\begin{eqnarray*}
K_{n}^{(i)}(x,dy) & = & (1-\epsilon) K_i(x,dy) + \epsilon \mathbb{E}\bigg[R_n^{(i)}(x,dy)\bigg]
\end{eqnarray*}
where $R_n^{(i)}$ is defined in Atchad\'e \& Liu. This is not the
kernel corresponding to the algorithm; the algorithm simulates:
\begin{eqnarray*}
Q_{S^{i-1}_x,i}(x,dy) & = & \int_{E} S_x^{i-1}(dy) K^{S}_i(K_i(dy))(x,y)
\end{eqnarray*}
that is, we do not integrate over the process $\{X_n^{i-1}\}$, we
condition upon it. Therefore, the proofs of Atchad\'e \& Liu do
not provide a theoretical validation of
the equi-energy sampler.

\section{Ergodicity Results}

The SLLN is now presented: \emph{we have only proved
the case when} $r=2$ and this is assumed hereafter. There are some
difficulties in extending our proof to the case $r\geq 3$; this
will be outlined after the proofs. Note that our proof is
non-trivial and relies on a SLLN for $U-$statistics of stationary
ergodic stochastic processes (Aaronson et al.~1996).

\subsection{Assumptions}

We make the following assumptions (it is assumed that for any
$i\in\mathbb{T}_r$, $j\in\mathbb{T}_d$, $\pi_i(E_j)>0$
throughout).

\begin{hypA}\label{assump:stability}
\noindent$\bullet$ (\emph{Stability of Algorithm}): There is a
universal constant $\theta> 0$, such that for any $n\geq 0$, $j\in\mathbb{T}_d$, $i\in\mathbb{T}_{r-1}$ we have, recalling that $N_{1:i}=\sum_{j=1}^i N_j$:
\begin{eqnarray*}
S_{N_{1:i}+n}^i(E_j) & \geq & \theta \qquad \mathbb{P}_{x_0^{1:r}}-a.s.
\end{eqnarray*}
\end{hypA}

\begin{hypA}\label{assump:KandP}
\noindent$\bullet$ (\emph{Uniform Ergodicity}): The $\{K_n\}_{n\in\mathbb{T}_r}$
are uniformly ergodic Markov kernels with a one step minorization condition.
That is: $\forall n\in\mathbb{T}_r$, $\exists (\phi_n,\nu_n)\in \mathbb{R}^+\times\mathscr{P}(E)$
such that $\forall (x,A) \in E\times \mathscr{E}$:
\begin{eqnarray*}
K_n(x,A) & \geq & \phi_n\nu_n(A).
\end{eqnarray*}
\end{hypA}

\begin{hypA}\label{assump:statespace}
\noindent$\bullet$ (\emph{State-Space Constraint}):
$E$ is polish (separable complete metrisable topological space).
\end{hypA}

\subsection{Discussion of Assumptions}

The assumptions we make are quite strong. The first assumption (A\ref{assump:stability}) is used to allow us
to bound:
\begin{displaymath}
\frac{1}{S^i_{m+1}(E_i)} - \frac{(m+2)}{(m+1)S^i_{m}(E_i)}
\end{displaymath}
which will appear in the proof below.
This assumption, on the empirical measure, is removed in Andrieu et al.~(2007);
however, this is at the cost of a significant increase in the technicalities
of the proof. As a result, (A\ref{assump:stability}) is adopted as an intuitive
assumption as it states:
\begin{enumerate}
\item{Make sure that $\pi_i(E_j)$ for all $i, j$ is non-negligable.}
\item{Let $N_1,\dots, N_{r-1}$ be reasonably large so that we can expect convergence.}
\end{enumerate}

The second assumption (A\ref{assump:KandP}) might appear strong, but allows us to 
significantly simplify both notation and our proofs whilst preserving the `essence' of the general proof. In addition, this condition will often be satisfied on finite state spaces. More general assumptions could be 
used, at the expense of significant notational and technical complexity.
The assumption allows us to use the following facts:
\begin{enumerate}
\item{For any fixed $\mu\in\mathscr{P}_d(E)$, $\exists \omega_i(\mu)\in\mathscr{P}(E)$ such that $\omega_i(\mu)K_{\mu,i}=\omega_i(\mu)$.}
\item{For any fixed $\mu\in\mathscr{P}_d(E)$, $i\in\mathbb{T}_r$, $\exists\rho \in(0,1)$, $M<\infty$ such that for any $n\in\mathbb{N}$
we have $\sup_{x\in E}\|K_{\mu,i}^n(x,\cdot)-\omega_i(\mu)\|_{\textrm{tv}}\leq M\rho^n$.}
\end{enumerate}
These properties will help to simplify our proofs below.

The final assumption (A\ref{assump:statespace}) will be related to
some technical arguments in the proof.

\subsection{SLLN}

We are to establish the convergence of $S_{n}^r(f)\stackrel{a.s}{\longrightarrow}_{\mathbb{P}_{x_{0}^{1:r}}}\pi_r(f)$
for some $f$ to be defined in the proof and $n\geq N_{1:r-1}$.

\subsubsection{Strategy of the Proof}

Our approach is to consider
$S^{\omega}_{n,r}= 1/(n-N_{1:r-1}+1)\sum_{j=N_{1:r-1}}^n\omega_r(S_j^{r-1})$
and adopt the decomposition:
\begin{eqnarray}
S_n^r(f) -\pi_r(f) & = & S_n^r(f) - S^{\omega}_{n,r}(f) + S^{\omega}_{n,r}(f) -\pi_r(f)\label{eq:prfdecomp}.
\end{eqnarray}
The analysis of the first term on the RHS of (\ref{eq:prfdecomp})
relies upon a Martingale argument using the classical
Poisson's equation solution:
\begin{eqnarray*}
f(X_n^r)-\omega(S_n^{r-1})(f) & = & \hat{f}_{S_n^{r-1}}^{r}(X_n^r) - K_{S_n^{r-1},r}(\hat{f}_{S_n^{r-1}}^{r})(X_n^r)
\end{eqnarray*}
where $\hat{f}_{S_n^{r-1}}^{r}$ is a solution of the Poisson equation.
Indeed, the first term on the RHS of (\ref{eq:prfdecomp}) can be rewritten:
\begin{eqnarray}
(n-N_{1:r-1}+1)[S_n^i-S_{n,r}^{\omega}](f) & = & M_{n+1}^r + \label{eq:decomp}
 \sum_{m=N_{1:r-1}}^{n}[\hat{f}_{S_{m+1}^{r-1}}^r(X_{m+1}^r) -
\\ & & \hat{f}_{S_m^{r-1}}^r(X_{m+1}^r)] +  \hat{f}_{S_{N_{1:r-1}}^{r-1}}^r(X_0^r)
 - \hat{f}_{S_{n+1}^{r-1}}^r(X_{n+1}^r) \nonumber
\end{eqnarray}
where
\begin{eqnarray}
M_{n+1}^r & = & \sum_{m=N_{1:r-1}}^{n}[\hat{f}_{S_m^{r-1}}^r(X_{m+1}^r) - K_{S_m^{r-1},r}(\hat{f}_{S_m^{r-1}}^r)(X_m^r)] \nonumber\\
\hat{f}_{S_m^{r-1}}^r(X_{m+1}^r) & = &
\sum_{n\in\mathbb{N}_0}[K^{n}_{S_m^{r-1},r}(f)(X_{m+1}^r) -
\omega_r(S_m^{r-1})(f)]\label{eq2}
\end{eqnarray}
and $\{M_n^r,\mathscr{G}_n\}_{n\geq 0}$ is a martingale and $M_n^r:=0$, for
$0\leq n\leq N_{1:r-1}$.
Recall that (\ref{eq2}) is a solution to the Poisson equation,
which will exist under our assumptions above. 

The proof will deal
with the Martingale via the Burkh\"older inequality and the fluctuations
of the solution of the Poisson equation due to the evolution of the empirical
measure (\ref{eq2}) using continuity properties of the kernel $Q_{\mu}$.
The bias term $S^{\omega}_{n,r}(f) -\pi_r(f)$ is controlled by a SLLN for
$U-$statistics of stationary ergodic stochastic processes.

\subsubsection{Main Result}

\begin{slln}
Assume (A\ref{assump:stability}-\ref{assump:KandP}). Then for any $p\geq 1$, $\exists B_p<\infty$ such that for any $n\geq N_{1:r-1}$ and $f\in\mathcal{B}_b(E)$ we have that:
\begin{eqnarray*}
\mathbb{E}_{x_0^{1:r}}\bigg[|[S_{n}^r - S_{n,r}^{\omega}](f)|^p\bigg]^{1/p} & \leq & \frac{B_p\|f\|_{\infty}}{(n - N_{1:r-1} + 1)^{\frac{1}{2}}}.
\end{eqnarray*}
if, in addition, (A\ref{assump:statespace}) holds then
for any $f\in\mathcal{B}_b(E)$:
\begin{eqnarray*}
S_{n}^2(f)\stackrel{a.s}{\longrightarrow}_{\mathbb{P}_{x_{0}^{1:2}}}
\pi_2(f).
\end{eqnarray*}
\end{slln}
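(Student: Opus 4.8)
The plan is to establish the two assertions in turn, following the decomposition (\ref{eq:prfdecomp}) advertised in the strategy. For the moment bound on $[S_n^r-S_{n,r}^\omega](f)$, I would start from the martingale-plus-remainder decomposition (\ref{eq:decomp})--(\ref{eq2}), which is valid under (A\ref{assump:KandP}) because uniform ergodicity of $K_{\mu,r}$ (fact (2) in the discussion of assumptions) guarantees that the Poisson solution $\hat f_{S_m^{r-1}}^r$ exists and satisfies $\|\hat f_\mu^r\|_\infty \le M(1-\rho)^{-1}\|f\|_\infty$ uniformly in $\mu\in\mathscr P_d(E)$. The martingale term $M_{n+1}^r$ is handled by the Burkh\"older--Davis--Gundy inequality: each increment is bounded by $2\|\hat f\|_\infty$ in sup-norm, so $\mathbb E_{x_0^{1:r}}[|M_{n+1}^r|^p]^{1/p}\lesssim (n-N_{1:r-1}+1)^{1/2}\|f\|_\infty$. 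The two boundary terms $\hat f_{S_{N_{1:r-1}}^{r-1}}^r(X_0^r)$ and $\hat f_{S_{n+1}^{r-1}}^r(X_{n+1}^r)$ are each $O(\|f\|_\infty)$ in sup-norm, hence negligible after dividing by $n-N_{1:r-1}+1$.

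The crux of the first bound is the telescoping sum $\sum_{m}[\hat f_{S_{m+1}^{r-1}}^r(X_{m+1}^r)-\hat f_{S_m^{r-1}}^r(X_{m+1}^r)]$, which measures the fluctuation of the Poisson solution caused by the update $S_{m}^{r-1}\rightsquigarrow S_{m+1}^{r-1}$. Here I would use a Lipschitz-type estimate: writing $\hat f_\mu^r=\sum_{n\ge0}[K_{\mu,r}^n(f)-\omega_r(\mu)(f)]$, one bounds $\|\hat f_{\mu'}^r-\hat f_\mu^r\|_\infty$ by the operator-norm difference $\|K_{\mu',r}-K_{\mu,r}\|$ times a geometric factor coming from uniform ergodicity (this is a standard perturbation-of-Poisson-solutions computation, e.g.\ Glynn \& Meyn or Andrieu et al.~2007). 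The difference $K_{\mu',r}-K_{\mu,r}=\epsilon(Q_{\mu'_x,r}-Q_{\mu_x,r})$ is controlled by $\|\mu'_x-\mu_x\|_{\mathrm{tv}}$, and the normalised-restriction map $\mu\mapsto\mu_x$ is Lipschitz on $\mathscr P_d(E)$ with a constant involving $\min_j\mu(E_j)^{-1}$ --- this is precisely where (A\ref{assump:stability}) enters, supplying the uniform lower bound $\theta$ so the constant is $O(\theta^{-1})$ rather than blowing up. Since $S_{m+1}^{r-1}-S_m^{r-1}=\tfrac1{m+2}[\delta_{X_{m+1}^{r-1}}-S_m^{r-1}]$ has total variation $O(1/m)$, each increment of the telescoping sum is $O(\|f\|_\infty/m)$; summing and raising to the $p$-th moment via Minkowski gives a term of order $(\log n)\|f\|_\infty$, which after dividing by $n-N_{1:r-1}+1$ is $o((n-N_{1:r-1}+1)^{-1/2})$. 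Collecting the three pieces yields the stated $(n-N_{1:r-1}+1)^{-1/2}$ rate.

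For the almost-sure convergence $S_n^2(f)\to\pi_2(f)$, I would invoke the decomposition with $r=2$. The first term $[S_n^2-S_{n,2}^\omega](f)\to0$ a.s.: the moment bound gives $\mathbb E|[S_n^2-S_{n,2}^\omega](f)|^p\le B_p^p\|f\|_\infty^p(n-N_1+1)^{-p/2}$, so choosing $p>2$ makes the series $\sum_n \mathbb E|\cdot|^p$ summable along $n$, and Borel--Cantelli plus a routine monotonicity/interpolation argument between integers upgrades this to a.s.\ convergence of the whole sequence. The second term is $S_{n,2}^\omega(f)-\pi_2(f)=\tfrac1{n-N_1+1}\sum_{j=N_1}^n[\omega_2(S_j^1)(f)-\pi_2(f)]$. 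Since the first chain $\{X_n^1\}$ is an ordinary ergodic Markov chain under $K_1$ (with $\pi_1 K_1=\pi_1$) and $S_j^1$ is its empirical measure, the map $\mu\mapsto\omega_2(\mu)(f)$ is a bounded measurable functional; writing $\omega_2(S_j^1)(f)$ out exposes it as a $U$-statistic-like average over pairs (and, through the energy-ring normalisation $\mu_x$, over the partition cells) of the stationary ergodic process $\{X_n^1\}$. Here (A\ref{assump:statespace}), polishness, is what licenses the regular-conditional-probability manipulations and the application of the SLLN for $U$-statistics of stationary ergodic sequences (Aaronson et al.~1996), giving $S_{n,2}^\omega(f)\to\pi_2(f)$ a.s. Adding the two limits finishes the proof.

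I expect the main obstacle to be the second term: correctly recognising $S_{n,2}^\omega(f)-\pi_2(f)$ as an average to which the Aaronson et al.\ $U$-statistic SLLN applies. The non-linear kernel $K_{\mu,2}$ depends on $\mu$ both through $Q_{\mu_x,2}$ and through the requirement $\mu\in\mathscr P_d(E)$, and $\omega_2(\mu)$ is only implicitly defined; one must show $\mu\mapsto\omega_2(\mu)(f)$ is continuous/measurable enough and express $\omega_2(S_j^1)(f)$ in a form whose summands are a fixed symmetric kernel applied to finitely many coordinates of the ergodic process $\{X_n^1\}$ (the energy-ring restriction makes this a kernel in the variables together with their partition-cell indices). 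The bound $\min_j S_j^1(E_j)\ge\theta$ from (A\ref{assump:stability}) keeps everything well-defined along the way. By comparison, the martingale and Poisson-fluctuation estimates for the first term are standard once the uniform Poisson bound and the Lipschitz continuity of $\mu\mapsto K_{\mu,r}$ are in hand.
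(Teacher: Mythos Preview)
Your treatment of the first assertion---the $\mathbb{L}_p$ bound on $[S_n^r-S_{n,r}^\omega](f)$---is essentially the paper's argument: the same Poisson/martingale decomposition (\ref{eq:decomp}), the same Burkh\"older bound on $M_{n+1}^r$, the same uniform Poisson bound from fact~(2), and the same $O(1/m)$ control on the telescoping increments via Lipschitz continuity of $\mu\mapsto Q_{\mu_x}$ together with (A\ref{assump:stability}). The paper carries out the perturbation computation for $\hat f_{S_{m+1}}-\hat f_{S_m}$ explicitly via the identity
\[
\hat f_{\mu'}-\hat f_\mu=\sum_{n\ge0}\Big[\sum_{i=0}^{n-1}[K_{\mu'}^i-\omega(\mu')](K_{\mu'}-K_\mu)[K_\mu^{n-i-1}-\omega(\mu)]+[\omega(\mu')-\omega(\mu)](K_\mu^n-\omega(\mu))\Big](f),
\]
but this is just the ``standard perturbation-of-Poisson-solutions computation'' you allude to.

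For the almost-sure part there is a real gap. You propose to exhibit $\omega_2(S_j^1)(f)$ as a $U$-statistic built from ``a fixed symmetric kernel applied to finitely many coordinates'' of $\{X_n^1\}$. This cannot be done: $\omega_2(\mu)$ is the invariant measure of $K_{\mu,2}$ and is not a polynomial of any fixed degree in $\mu$---it is, formally, an infinite series in $\mu$. So there is no finite-degree $U$- or $V$-statistic to which Aaronson et al.~applies directly.

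The paper's remedy is to insert a finite-iterate approximation. One writes
\[
|[\omega(S_m)-\omega(\pi_1)](f)|\le|[K_{S_m}^q-K_{\pi_1}^q](f)(x)|+|[\omega(S_m)-K_{S_m}^q](f)(x)|+|[K_{\pi_1}^q-\omega(\pi_1)](f)(x)|,
\]
where the last two terms are bounded by $M\rho^q\|f\|_\infty$ \emph{uniformly in $m$} thanks to the uniform ergodicity in fact~(2). The point is that for each fixed $q$ the $q$-fold iterate $K_\mu^q(f)(x)$ (first for $\epsilon=1$, then for general $\epsilon$ via a binomial expansion of the mixture) admits an explicit formula
\[
Q_{\mu_x}\cdots Q_{\mu_\cdot}(f)(x)=\sum_{(i_1,\ldots,i_q)\in\mathbb{T}_d^q}\frac{\mathbb{I}_{E_{i_1}}(x)}{\prod_{j=1}^q\mu(E_{i_j})}\,\mu^{\otimes q}\Big\{\big(\textstyle\prod_j\mathbb{I}_{E_{i_j}}\big)\,P(\mathbb{I}_{E_{i_2}}P(\cdots P(\mathbb{I}_{E_{i_q}}P(f))))\Big\}(x),
\]
proved by induction on $q$. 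With $\mu=S_m^1$ each summand is a ratio of $V$-statistics of degree $q$ in $(X_0^1,\ldots,X_m^1)$, and it is to \emph{these} that Theorem~U and Proposition~2.8 of Aaronson et al.~apply (this is also where (A\ref{assump:statespace}) is actually used). One obtains $[K_{S_m}^q-K_{\pi_1}^q](f)(x)\to0$ a.s.\ for every $q$, hence $\omega(S_m)(f)\to\pi_2(f)$ a.s., and the Ces\`aro average $S_{n,2}^\omega(f)$ follows. The missing idea in your proposal is precisely this ``approximate $\omega$ by $K^q$ first, then recognise the iterate as a finite-degree $V$-statistic'' step.
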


\begin{proof}
Our proof relies heavily upon the theory of Andrieu et al.~(2007).
Note that, under (A2) and, for any fixed $\mu\in\mathscr{P}_d(E)$, 
the uniform ergodicity of the kernel $K_{\mu,i}$
allows us to use
the methods of Andrieu et al.~(2007).
We will follow the proof of Theorem 6.5 of that paper. 
In order
to prove the SLLN in the paper, the authors combine a series of
technical results. The first of which is the Lipschitz
continuity of the kernel $Q_{\mu}$; we establish the result for
bounded functions and the particular kernel considered here. To
simplify the notation, we remove the sub/superscripts from the
various objects below.

Let $f\in\mathcal{B}_b(E)$ and $\mu,\xi\in\mathscr{P}_d(E)$, then we have:
\begin{eqnarray*}
|Q_{\mu_x}(f)(x)  - Q_{\xi_x}(f)(x)| & = & \sup_{(x,y)\in E^2}\|K^S(K(f))(x,y) - Q_{\mu_x}(f)(x)\|_{\infty} \\ & &
\times\bigg|\int_{E\times E}\frac{K^S(K(f))(x',y) - Q_{\mu_x}(f)(x')}{\sup_{(x,y)\in E^2}\|K^S(K(f))(x,y) - Q_{\mu_x}(f)(x)\|_{\infty}}
\times \\ & &
\mu_x(dy)\times\delta_x(dx') - \xi_x(dy)\times\delta_x(dx')\bigg|\\
& \leq & 2\|f\|_{\infty} \sup_{x\in E}\|\mu_x - \xi_x\|_{\textrm{tv}}
\end{eqnarray*}

We then note that Propositions 6.1 and 6.2 (bounding the solution
of the Poisson equation and Martingale in the $\mathbb{L}_p$ norm)
of Andrieu et al.~(2007) are proved in the same manner. That is, in a similar
way to the proofs constructed there, we can show that:
\begin{eqnarray*}
\mathbb{E}_{x_0^{1:r}}\bigg[|\widehat{f}_{S_m}(X_{m+1})|^p\bigg]^{1/p}
& \leq & M\|f\|_{\infty}\\
\mathbb{E}_{x_0^{1:r}}\big[|M_n|^p\big]^{1/p} & \leq & M\|f\|_{\infty}n^{1/2}.
\end{eqnarray*}
As a result, the verification of Proposition 6.3 (bounding the
fluctuations of the Poisson equation due to the evolution of the empirical measure) and
Theorem 6.5 (the SLLN) are required.

We begin with the equation (\ref{eq2}); the bound is proved by establishing:
\begin{eqnarray}
|S_{m+1,x}(f) - S_{m,x}(f)| & \leq & \frac{M\|f\|_{\infty}}{m+2}\label{eq:prf1}
\end{eqnarray}
for $M<\infty$ some constant and any $f\in\mathcal{B}_b(E)$.
Consider
\begin{eqnarray*}
|S_{m+1,x}(f) - S_{m,x}(f)| & = & \bigg|\sum_{i=1}^d\mathbb{I}_{E_i}(x)\bigg[\frac{S_{m+1}(\mathbb{I}_{E_i}f)}{S_{m+1}(E_i)} - \frac{S_{m}(\mathbb{I}_{E_i}f)}{S_{m}(E_i)}\bigg] \bigg|\\
& = & \bigg|\sum_{i=1}^d\mathbb{I}_{E_i}(x)\bigg[\frac{f(x_{m+1})\mathbb{I}_{E_i}(x_{m+1})}{(m+2)S_{m+1}(E_i)}
+ \frac{1}{m+2}\sum_{j=0}^m f(x_j)\mathbb{I}_{E_i}(x_{j})\\ & & \times \bigg\{\frac{1}{S_{m+1}(E_i)} - \frac{m+2}{(m+1)S_{m}(E_i)}\bigg\}
\bigg]\bigg|.
\end{eqnarray*}
Now, since:
\begin{eqnarray*}
\bigg|\frac{1}{S_{m+1}(E_i)} - \frac{(m+2)}{(m+1)S_{m}(E_i)}\bigg| & = & \frac{|(m+1)S_m(E_i) - \delta_{x_{m+1}}(E_i) - (m+1)S_{m}(E_i)|}{(m+1)S_m(E_i)S_{m+1}(E_i)}\\
& = & \frac{\delta_{x_{m+1}}(E_i)}{(m+1)S_m(E_i)S_{m+1}(E_i)}\\
& \leq & \frac{1}{(m+1)\theta^2}
\end{eqnarray*}
it follows that:
\begin{eqnarray*}
|S_{m+1,x}(f) - S_{m,x}(f)| & \leq & \frac{\|f\|_{\infty}}{(m+2)}
\sum_{i=1}^d\mathbb{I}_{E_i}(x)\bigg[\frac{1}{\theta}
+\frac{1}{\theta^2}\bigg] \\
& \leq & \frac{M\|f\|_{\infty}}{m+2}
\end{eqnarray*}
as required.

To bound the fluctuations of the Poisson equation, the decomposition (Proposition B.5) in Andrieu et al.~(2007) is adopted, along with Minkowski's inequality:
\begin{displaymath}
\mathbb{E}_{x_{0}^{1:r}}\bigg[|\widehat{f}_{S_{m+1}}(X_{m+1}) -
\widehat{f}_{S_{m}}(X_{m+1})|^p\bigg]^{1/p}  \leq
\end{displaymath}
\begin{displaymath}
\sum_{n\in\mathbb{N}_0}\bigg[\mathbb{E}_{x_{0}^{1:r}}\bigg[|\sum_{i=0}^{n-1}[K_{S_{m+1}}^i
- \omega(S_{m+1})](K_{S_{m+1}} - K_{S_{m}})[K_{S_m}^{n-i-1}-\omega(S_m)](f)(X_{m+1})|^p\bigg]^{1/p}
+
\end{displaymath}
\begin{displaymath}
\mathbb{E}_{x_{0}^{1:r}}\bigg[|[\omega(S_{m+1})-\omega(S_{m})](K_{S_m}^n-\omega(S_{m}))|^p\bigg]^{1/p}\bigg]
\end{displaymath}
To bound the first expression on the RHS, we can use the fact
that, for a fixed (deterministic) pair of empirical measures
$S_m,S_{m+1}\in\mathscr{P}_d(E)$ and for any $x\in E$:
\begin{displaymath}
|[K_{S_{m+1}}^i
- \omega(S_{m+1})](K_{S_{m+1}} - K_{S_{m}})[K_{S_m}^{n-i-1}-\omega(S_m)](f)(x)|
\leq \end{displaymath}
\begin{displaymath}
M\rho^{i}\|(K_{S_{m+1}} - K_{S_{m}})[K_{S_m}^{n-i-1}-\omega(S_m)](f)\|_{\infty}
\end{displaymath}
and further, for any $x\in E$:
\begin{displaymath}
|(K_{S_{m+1}} - K_{S_{m}})[K_{S_m}^{n-i-1}-\omega(S_m)](f)(x)|
\leq \frac{M\|[K_{S_m}^{n-i-1}-\omega(S_m)](f)\|_{\infty}}{m+2}
\end{displaymath}
due to the Lipschitz continuity of $Q$ and the bound (\ref{eq:prf1}); therefore:
\begin{eqnarray*}
\|[K_{S_{m+1}}^i
- \omega(S_{m+1})](K_{S_{m+1}} - K_{S_{m}})[K_{S_m}^{n-i-1}-\omega(S_m)](f)\|_{\infty}
& \leq & \frac{M\rho^{n-1}}{m+2}.
\end{eqnarray*}
Since, due to (A\ref{assump:stability}), this property holds almost surely,
it is possible to bound the first expression. The second expression is dealt
with in a similar manner, using the inequality (see Andrieu et al.~(2007)):
\begin{eqnarray*}
\|[\omega(S_{m+1})-\omega(S_{m})](f)\|_{\infty} & \leq &
M\|[K_{S_{m+1}}-K_{S_{m}}](f)\|_{\infty}.
\end{eqnarray*}
This result can be obtained by the continuity of invariant measures
of uniformly ergodic Markov kernels indexed by a parameter.

To complete the first part of the proof, we can use the manipulations of Del Moral \& Miclo (2004), Proposition 3.3, to yield:
\begin{eqnarray*}
\mathbb{E}_{x_0^{1:r}}\bigg[|[S_{n}^r - S_{n,r}^{\omega}](f)|^p\bigg]^{1/p} & \leq & \frac{B_p\|f\|_{\infty}}{(n - N_{1:r-1} + 1)^{\frac{1}{2}}}.
\end{eqnarray*}

To control the bias $S^{\omega}_{n,r}(f) -\pi_r(f)$ when
$r=2$, the following decomposition is adopted:
\begin{eqnarray*}
|[\omega(S_m) - \omega(\pi_{1})](f)| & \leq &  |[K_{S_m}^q - K_{\pi_1}^q](f)|
+ |[\omega(S_m) - K_{S_m}^q](f)| +|[K_{\pi_1}^q - \omega(\pi_{1})](f)|.
\end{eqnarray*}
Due to the uniform ergodicity bound $\|K_{\mu}^q-\omega(\mu)\|_{\textrm{tv}}\leq M\rho^k$
we will show that for any $q\in\mathbb{N}$:
\begin{eqnarray}
\lim_{m\rightarrow\infty}|[K_{S_m}^q - K_{\pi_1}^q](f)| & = & 0 \qquad \mathbb{P}_{x_{0}^{1:r}}-a.s.
\label{eq:prfeq2}
\end{eqnarray}

Let $\epsilon=1$; the general case is dealt with below.
Let $\mu\in\mathscr{P}_d(E)$, and for simplicity write $K^S(K(f)\times 1)(x,y):=
P(f)(x,y)$, $f\in\mathcal{B}_b(E)$, then we will prove by induction that:
\begin{equation}
\underbrace{Q_{\mu_x}Q_{\mu_{\cdot}}\dots Q_{\mu_{\cdot}}}_{q~\textrm{times}}(f)(x)
 = \sum_{(i_1,\dots,i_q)\in\mathbb{T}_d^q}
\frac{\mathbb{I}_{E_{i_1}}(x)}
{\prod_{j=1}^q\mu(E_{i_{j}})}\mu^{\otimes
q}\bigg\{\big(\prod_{j=1}^q\mathbb{I}_{E_{i_j}}\big)
\underbrace{P(\mathbb{I}_{E_{i_2}}P(\mathbb{I}_{E_{i_3}}\cdots
P(\mathbb{I}_{E_{i_q}}P(f))))}_{q-1~\textrm{terms}} \bigg\}(x)
\label{eq:prfeq1}
\end{equation}
where a composition of the $P$ kernels is defined as:
\begin{equation*}
P^q(f)(x,x_{1:q}) := \int_{E^{q+1}}P((x,x_1),dy_1)P((y_1,x_2),dy_2)\dots P((y_{q-1},x_q),dy_q)f(y_q).
\end{equation*}
For $q=1$ (\ref{eq:prfeq1}) clearly holds, so assume for $q-1$ and consider
$q$:
\begin{eqnarray*}
Q_{\mu_x}Q_{\mu_{\cdot}}\dots Q_{\mu_{\cdot}}(f)(x) & = &
\sum_{(i_1,\dots,i_{q-1})\in\mathbb{T}_d^{q-1}}
\frac{\mathbb{I}_{E_{i_1}}(x)}
{\prod_{j=1}^{q-1}\mu(E_{i_{j}})}\mu^{\otimes (q-1)}\bigg\{\big(\prod_{j=1}^{q-1}\mathbb{I}_{E_{i_j}}\big)\\
& &
P(\mathbb{I}_{E_{i_2}}\cdots
P(\mathbb{I}_{E_{i_{q-1}}}P(Q_{\mu_{\cdot}}(f))))\bigg\}(x)
\end{eqnarray*}
To continue the proof, consider:
\begin{eqnarray*}
P(Q_{\mu_{\cdot}}(f))(x,x_1) & = & \int_E P((x,x_1),dy_1)\int_{E}\mu_{y_1}(dx_2)P(f)(y_1,x_2)\\
& = & \int_E P((x,x_1),dy_1) \sum_{i=1}^d\mathbb{I}_{E_i}(y_1)\frac{\int\mathbb{I}_{E_i}(x_2)P(f)(y_1,x_2)\mu(dx_2)}{\mu(E_i)}\\
& = & \sum_{i=1}^d\frac{\mu(\mathbb{I}_{E_i}P(\mathbb{I}_{E_i}P(f)))}{\mu(E_i)}.
\end{eqnarray*}
Thus, due to the above equation:
\begin{eqnarray*}
Q_{\mu_x}Q_{\mu_{\cdot}}\dots Q_{\mu_{\cdot}}(f)(x) & = & \sum_{(i_1,\dots,i_{q-1})\in\mathbb{T}_d^{q-1}}
\frac{\mathbb{I}_{E_{i_1}}(x)}
{\prod_{j=1}^{q-1}\mu(E_{i_{j}})}\mu^{\otimes (q-1)}\bigg\{\big(\prod_{j=1}^{q-1}\mathbb{I}_{E_{i_j}}\big)\\
& &
P(\mathbb{I}_{E_{i_2}}\cdots
P\bigg[\mathbb{I}_{E_{i_{q-1}}}\sum_{i_q=1}^d\frac{\mu(\mathbb{I}_{E_{i_q}}P\mathbb{I}_{E_{i_q}}P(f)))}{\mu(E_{i_q})}\bigg])\bigg\}(x)\\
& = & \sum_{(i_1,\dots,i_{q})\in\mathbb{T}_d^{q}}
\frac{\mathbb{I}_{E_{i_1}}(x)}
{\prod_{j=1}^{q}\mu(E_{i_{j}})}\mu^{\otimes (q-1)}\bigg\{\big(\prod_{j=1}^{q-1}\mathbb{I}_{E_{i_j}}\big)\\
& &
P(\mathbb{I}_{E_{i_2}}\cdots
P\bigg[\mathbb{I}_{E_{i_{q-1}}}\mu(\mathbb{I}_{E_{i_q}}P\mathbb{I}_{E_{i_q}}P(f)))\bigg])\bigg\}(x).
\end{eqnarray*}
Application of Fubini's theorem yields the desired result.

To prove, for $\epsilon=1$, that (\ref{eq:prfeq2}) holds, observe that:
\begin{eqnarray*}
|[K_{S_m}^q - K_{\pi_1}^q](f)| & = &  \bigg|\sum_{(i_1,\dots,i_q)\in\mathbb{T}_d^q}\bigg[
\frac{\mathbb{I}_{E_{i_1}}(x)}
{\prod_{j=1}^q S_m(E_{i_{j}})}S_m^{\otimes q}\bigg\{\big(\prod_{j=1}^q\mathbb{I}_{E_{i_j}}\big)
P(\mathbb{I}_{E_{i_2}}P(\mathbb{I}_{E_{i_3}}\cdots
P(\mathbb{I}_{E_{i_q}}P(f))))
\bigg\}(x)- \\ & &
\frac{\mathbb{I}_{E_{i_1}}(x)}
{\prod_{j=1}^q \pi_1(E_{i_{j}})}\pi_1^{\otimes q}\bigg\{\big(\prod_{j=1}^q\mathbb{I}_{E_{i_j}}\big)
P(\mathbb{I}_{E_{i_2}}P(\mathbb{I}_{E_{i_3}}\cdots
P(\mathbb{I}_{E_{i_q}}P(f))))
\bigg\}(x)
\bigg]\bigg|.
\end{eqnarray*}

Application of Theorem U and Proposition 2.8 of Aaronson et al.~(1996) (along with the Theorem
for almost sure convergence of continuous transformations of almost surely
convergent random variables) yields the desired result. 
Firstly, note that these are
results associated to the almost sure convergence of $U-$ and $V-$ (Von Mises) statistics; this is where (A\ref{assump:statespace}) is required. 
Secondly, we remark
that it is not required that the auxiliary process is started in its stationary
regime (as stated in the result of Aaronson et al.~(1996)): We can adopt
a coupling argument for uniformly ergodic Markov chains, along the lines of Andrieu et al.~(2007) (Theorem 6.5 and Proposition C.1).
To complete
the proof for $\epsilon\in(0,1)$, we note the following decomposition for
iterates of mixtures of Markov kernels $K$ and $P$:
\begin{equation*}
((1-\epsilon)K+\epsilon P)^n(x,dy) = \sum_{l=0}^n\epsilon^l(1-\epsilon)^{n-l}
\sum_{(\alpha_1,\dots,\alpha_n)\in\mathcal{S}_l}
K^{1-\alpha_1}P^{\alpha_1}\dots K^{1-\alpha_n} P^{\alpha_n}(x,dy).
\end{equation*}
where $\mathcal{S}_l=\{(\alpha_1,\dots,\alpha_n):\sum_{j=1}^n\alpha_j=l\}$;
there is no difficulty to extend the result, using the bounded convergence
theorem where required.
\end{proof}

\noindent \emph{\textbf{Remark 3}. In the proof we have adopted a decomposition
that has naturally led to the use of SLLN for $U-$statistics. Essentially,
the algorithm requires that the invariant measures converge to the desired
distribution, and this is manifested, in our proof, via the iterates of
the non-linear kernel. This is the main difficulty in proving the SLLN
for the equi-energy sampler.
An alternative
approach, via uniform SLLN, may also be adopted, possibly at the cost of more
abstract assumptions; see Del Moral (2004) for example, in the case of particle
approximations of Feynman-Kac formulae.}

\noindent \emph{\textbf{Remark 4}. 
We note that it is possible to extend our proof, via a density argument (see Del Moral (1998)), for a related algorithm, (NL3) of Andrieu et al.~(2007), with $r-1$ feeding
chains, but that this cannot be used for the equi-energy sampler, due to
the fact that the indicator functions in the definition of the kernel (\ref{nonlinker})
are not continuous. In general, a proof by induction requires
more complicated arguments and
as
a result, we feel that the convergence of the equi-energy sampler, as well
as the convergence rate (as brought up in the discussion of Kou et al.~(2006))
are non-trivial research problems.}

\vspace{0.05 in}

{\ \nocite{*} \centerline{ REFERENCES}
\begin{list}{}{\setlength{\itemindent}{-0.3in}}
\item {\sc Aaronson}, J., {\sc Burton}, R., {\sc Dehling}, H.,
{\sc Gilhat}, D., {\sc Hill}, T. \& {\sc Weiss} B.~(1996). Strong
laws for $L-$ and $U-$ statistics. {\it Trans. Amer. Math. Soc.},
{\bf 348}, 2845--2866. 
\item
{\sc Andrieu,} C., {\sc Jasra}, A., {\sc Doucet}, A. \& {\sc Del
Moral} P.~(2007). Non-Linear Markov chain Monte Carlo via self
interacting approximations. Technical Report, University of
Bristol. \item {\sc Atchad\'e}, Y. \& {\sc Liu}, J. S.~(2006).
Discussion of Kou, Zhou \& Wong. {\it Ann. Statist.}, {\bf 34},
1620--1628. \item {\sc Del Moral}, P.~(1998). Measure valued
processes and interacting particle systems. Application to non
linear filtering problems. {\it Ann. Appl. Prob.}, {\bf 8},
438--495. \item {\sc Del Moral}, P.~(2004). \textit{Feynman-Kac
Formulae: Genealogical and Interacting Particle Systems with
Applications}. Springer: New York. \item {\sc Del Moral}, P. \&
{\sc Miclo} L.~(2004). On convergence of chains with occupational
self-interactions. {\it Proc. R. Soc. Lond. A}, {\bf 460},
325--346. \item {\sc Glynn,} P. W. \& {\sc Meyn} S. P.~(1996). A
Liapounov bound for solutions of the Poisson equation. {\it Ann.
Prob.}, {\bf 24}, 916--931. 
\item {\sc Kou}, S. C, {\sc Zhou}, Q., \& {\sc Wong}, W.
H.~(2006). Equi-energy sampler with applications to statistical
inference and statistical mechanics (with discussion). {\it Ann.
Statist.}, {\bf 34}, 1581--1619.
\end{list}

\vspace{0.2cm}
\noindent
{\sc Department of Mathematics}\hspace{4cm}  {\sc Department of Mathematics}\\
{\sc University of Bristol}\hspace{5.2cm} {\sc Imperial College London}\\
{\sc Bristol}\hspace{7.8cm} {\sc London}\\
{\sc England}\hspace{7.6cm} {\sc England}\\
{\sc E-Mail:} c.andrieu@bris.ac.uk\hspace{4.5cm} {\sc E-mail:}a.jasra@ic.ac.uk\\

\noindent
{\sc Department of Statistics}\hspace{4.6cm}{\sc Department of Mathematics}\\
{\sc University of British Columbia}
\hspace{3.3cm}{\sc University of Nice}\\
{\sc Vancouver}\hspace{7.1cm} {\sc Nice}\\
{\sc Canada}\hspace{7.7cm} {\sc France}\\
{\sc E-Mail:} arnaud@stat.ubc.ca\hspace{4.5cm} {\sc E-mail:}delmoral@math.unice.fr\\
\end{document}